\font\Goth=yinitas scaled \magstep0
\newcommand{\Gth}[1]{\lower2mm\hbox{\Goth #1}}
\def\al{\alpha}
\def\eps{\varepsilon}
\def\de{\delta}
\def\be{\beta}
\def\l1{{\lambda}_1}
\newcommand{\f}{\frac}
\def\x1{{\xi }_{xx}}
\def\x2{{\xi }_{yy}}
\def\x3{{\xi }_{xy}}
\def\e1{{\eta }_{xx}}
\def\e2{{\eta }_{yy}}
\def\e3{{\eta }_{xy}}
\newcommand{\ds}{\displaystyle }
\newtheorem{definition}{Definition}
\newtheorem{remark}{Remark}
\newtheorem{theorem}{Theorem}
\newtheorem{lemma}{\bf Lemma}
\newtheorem{example}{Example}
\newcommand{\beqn}{\begin{eqnarray*}}
\newcommand{\eeqn}{\end{eqnarray*}}
\newcommand{\beqnn}{\begin{eqnarray}}
\newcommand{\eeqnn}{\end{eqnarray}}
\newcommand{\p}{\partial}
\newcommand{\bb}{\begin{equation}}
\newcommand{\ee}{\end{equation}}
\newcommand{\ba}{\begin{array}}
\newcommand{\ea}{\end{array}}
\newcommand{\R}{\mathbb{R}}
\newcommand{\N}{\mathbb{N}}
\begin{document}
\pagenumbering{arabic}
\title{\huge \bf Symmetry analysis of a class of autonomous even-order ordinary differential equations}
\author{\rm \large Priscila Leal da Silva and Igor Leite Freire \\
\\
\it Centro de Matem\'atica, Computa\c c\~ao e Cogni\c c\~ao\\ \it Universidade Federal do ABC - UFABC\\ \it 
Rua Santa Ad\'elia, $166$, Bairro Bangu,
$09.210-170$\\\it Santo Andr\'e, SP - Brasil\\
\rm E-mail: priscila.silva@ufabc.edu.br/pri.leal.silva@gmail.com\\
\rm E-mail: igor.freire@ufabc.edu.br/igor.leite.freire@gmail.com}
\date{\ }
\maketitle
\vspace{1cm}
\begin{abstract}
{A class of autonomous, even-order ordinary differential equations is discussed from the point of view of Lie symmetries. It is shown that for a certain power nonlinearity, the Noether symmetry group coincides with the Lie point symmetry group. First integrals are established and exact solutions are found. Furthermore, this paper complements, for the one-dimensional case, some results in the literature of Lie group analysis of poliharmonic equations and Noether symmetries obtained in the last twenty years. In particular, it is shown that the exceptional negative power discovered in [{\sc Bokhari, A. H., Mahomed, F. M. and Zaman, F. D.} (2010) Symmetries and integrability of a fourth-order Euler-Bernoulli beam equation. {\em J. Math. Phys.}, \textbf{51}, 053517] is a member of a one-parameter family of exceptional powers in which the Lie symmetry group coincides with the Noether symmetry group.}
{Lie point symmetries, Noether symmetries, first integrals, exact solutions, ordinary differential equations, power nonlinearities.}
\\
2000 Math Subject Classification: 34A05, 35A30, 58J70, 70G65, 76M60
\end{abstract}
\vskip 1cm
\begin{center}
{2010 AMS Mathematics Classification numbers:\vspace{0.2cm}\\
34A05, 35A30, 58J70, 70G65, 76M60  \vspace{0.2cm} \\
Key words: Lie point symmetries, Noether symmetries, first integrals, exact solutions, ordinary differential equations, power nonlinearities}
\end{center}
{\bf Dedicatory:} I. L. Freire dedicates this work for his father, Antonio Fernando Santos Freire.
\pagenumbering{arabic}
\newpage

\section{Introduction}

In this paper we consider the equation
\bb\label{1.1}
y^{(2n)}+f(y)=0
\ee
from the point of view of Lie group analysis.  In (\ref{1.1}) and from now on, $n$ is a positive integer, $x\in\R$ is an independent variable while $y=y(x)$ is a dependent one and $f$ is a smooth function. Moreover,
$$y':=\f{dy}{dx},\,y'':=\f{d^2y}{dx^2},\cdots,y^{(k)}:=\f{d^k y}{dx^k},\cdots$$

Such class of equations includes many important mathematical models for phenomena arising from Mathematical Physics and Engineering. For instance, when $n=1$ and $f(y)=\omega^2y$, equation (\ref{1.1}) is the well known harmonic oscillator. Another important equation is given by the celebrated Ermakov equation
\bb\label{1.2}
y''+\lambda y^{-3}=0,
\ee
which can physically be interpreted as an oscillator with a nonlinear restoring force acting on it. For $n=2$, equation (\ref{1.1}) models the applied load and the deflection in a beam when the acting force depends on the deflection, see \cite{han,bok1}. Special cases of equation (\ref{1.1}) are also employed for modeling phenomena in the general relativity and other Physics' branches, see \cite{go}, \cite{moyo}, \cite{moises}, \cite{pri} and references therein.

For a long time, the investigation of invariance properties of second order ODEs with power nonlinearities was very intense. In particular, equations of the type
$$y''=f(x)y^p,$$
were widely investigated in \cite{go}, \cite{so}, and, moreover, such an equation is linked to the general family
$$y''+p(x)y'+q(x)y=r(x)y^p$$
via Kummer-Liouville transformations, see \cite{go, mellin} and references therein for further details.

It is well known that a second order ordinary differential equation does not admit a $r-$dimensional symmetry Lie algebra if $r\in\{4,5,6,7\}$ and, moreover, its Lie algebra of symmetries is at most 8(=2+6) dimensional, see, for instance \cite{lie, fa1989}. This last case is reached when, physically, we have the free particle equation or then, the original equation is linearizable via point transformation, see \cite{fa1,sar}. Although these last sentences could suggest that only linear equations can admit an eight-dimensional Lie algebra of symmetries, it is well known that some nonlinear differential equations also have the same property, see \cite{sar} for a better discussion.

An arbitrary $n-th$ order linear ordinary differential equation possesses $n+1,\,n+2,\,n+3$ and the maximum $n+4$ symmetries when $n\geq3$ (see, for instance, \cite{fa1} and references therein), which shows a substantial difference compared with the one of second order. We also guide the interested reader to \cite{fa2}'s survey on Lie symmetry analysis of linear ordinary differential equations for further discussion and extensive bibliography regarding these points. 

An interesting case corresponds to equation (\ref{1.2}). Such an equation admits a three-dimensional, unsolvable, Lie algebra of symmetries isomorphic to $\frak{sl}(2,\R)$, see \cite{go}. Although it is not solvable, using such an algebra is sufficient to reduce (\ref{1.2}) to quadratures. Moreover, it is well known that the Lie symmetries of (\ref{1.2}) are also Noether symmetries and, therefore, it is possible to construct first integrals associated to each symmetry, see, for instance, \cite{moises}.

In a more recent paper, \cite{bok1} considered a fourth-order equation
$$y''''=f(y),$$
which is a nonlinear generalization of the static Euler-Bernoulli equation used to describe the relationship between the applied load and the deflection in a beam, see \cite{han,bok1}.

In the mentioned reference the authors carried out a complete group classification of that equation, as well as they considered its Noether symmetries and from then, first integrals.

Similarly to the second order differential equation, those authors showed that the equation
\bb\label{1.3}
y''''+\lambda y^{-\f{5}{3}}=0
\ee
also admits a three-dimensional symmetry Lie algebra, again isomorphic to the classical $\frak{sl}(2,\R)$ Lie algebra and all Lie symmetries are also Noether symmetries, which was a curious and surprising result, later discussed in \cite{bok2}. 

In a previous paper (\cite{ipm}, see also \cite{ipm2012}) we, jointly with M. Torrisi, considered the fourth-order equation
$$y''''+ax^\gamma y^p=0.$$
Among other results, we showed that in the nonlinear cases, the maximal symmetry Lie algebra is achieved when $\gamma=0$ and $p=-5/3$. This intriguing result motivated us to write the present paper.

In a na\"ive way, we observe the following:
$$-3=\f{1+2\times1}{1-2\times1},\,\,\,-\f{5}{3}=\f{1+2\times2}{1-2\times2}.$$
This simple observation shows a connection, at least to these two cases, between the order of the equation $2=2\times 1,\,4=2\times 2$ and the exceptional power. We will show, using Noether symmetries, that for the power
$$p=\f{1+2n}{1-2n}$$
all Lie point symmetries of the equation
$$y^{(2n)}+\lambda y^{p}=0,\,\,\lambda\neq0,$$
are Noether symmetries and, therefore, the mentioned results for (\ref{1.2}) and (\ref{1.3}) are consequences of our results.

The paper is as the follows. In the next section we present the main results (theorems \ref{teo1}, \ref{teo2} and \ref{teo3}), a preliminary discussion about the purposes of this work and the state of the art regarding these topics. A revision on some basic facts regarding Lie symmetries and Noether Theorem is done in section \ref{rev}. Section \ref{aux} presents proofs of some technical results, in order to avoid long and tedious demonstrations of theorems \ref{teo1}, \ref{teo2} and \ref{teo3}. Then, in section \ref{main}, the complete group classification of equation (\ref{1.1}) is carried out (Theorem \ref{teo1}). Theorems \ref{teo2} and \ref{teo3} are proved in section \ref{noether}. As a consequence, first integrals are established and some exact and explicit solutions are obtained in section \ref{integrals}. Final comments are presented  next.

\section{Main results and preliminary discussions}


In what follows, $\lambda$ is a real constant. Our first result is given by the following:
\begin{theorem}\label{teo1}
A basis to the Lie point symmetry generators of equation $(\ref{1.1})$, for an arbitrary $f=f(y)$ and any $n\geq1$, is generated by the vector field
\bb\label{1.4}
X_{1}=\f{\p}{\p x}.
\ee
For special choices of the function $f(y)$ it is possible to enlarge the Lie point symmetry group. The additional generators to $(\ref{1.4})$ are:
\begin{enumerate}
\item If $f(y)=\lambda e^{\al y},\,\lambda\al\neq0$ and $n\geq1$, we have
\bb\label{1.5}
D_{1}=x\f{\p}{\p x}-\f{2n}{\al}\f{\p}{\p y}.
\ee
\item If $f(y)=\lambda y^p,\,p\neq 0,1,\,\lambda\neq0$ and $n\geq1$, we have
\bb\label{1.6}
D_{p}=x\f{\p}{\p x}+\f{2n}{1-p}y\f{\p}{\p y}.
\ee
\item If $f(y)=\lambda y^{\f{1+2n}{1-2n}},\,\lambda\neq 0$ and $n\geq1$, we have
\bb\label{1.7}
X_{2}=x\f{\p}{\p x}+\f{2n-1}{2}y\f{\p}{\p y}
\ee
and
\bb\label{1.8}
X_{3}=x^2\f{\p}{\p x}+(2n-1)xy\f{\p}{\p y}.
\ee
\item If $f(y)=\lambda$:
\begin{enumerate}
\item if $n>1$, we have
\bb\label{l1}
Y_{1}=\ds{x\f{\p}{\p x}+\f{2n-1}{2}\left[y-\lambda\f{2n+1}{2n-1}\f{x^{2n}}{(2n)!}\right]}\f{\p}{\p y},
\ee
\bb\label{l2}
Y_{2}=\ds{x^2\f{\p}{\p x}+x\left[(2n-1)y-\lambda\f{x^{2n}}{(2n)!}\right]\f{\p}{\p y}},
\ee
\bb\label{l3}
Y_{3}=\ds{\left(y- \lambda \f{x^{2n}}{(2n)!}\right)\f{\p}{\p y}}
\ee
and
\bb\label{l4}
Z_{j}=\f{x^{j}}{j!}\f{\p}{\p y},\,\,\, 0\leq j \leq 2n-1.
\ee
\item if $n=1$, the additional generators are $(\ref{l1})$, $(\ref{l2})$, $(\ref{l3})$, $(\ref{l4})$ with $j=0,1$, 
and
\bb\label{l5}
Y_{4}=\left(xy-\f{\lambda}{2}x^3\right)\f{\p}{\p x}+\left(y^2-\f{\lambda}{4}x^4\right)\f{\p}{\p y},\,\,\,
Y_{5}=\left(y-\f{3}{2}\lambda x^2\right)\f{\p}{\p x}-\lambda x^3\f{\p}{\p y}.
\ee

\end{enumerate}

\item If $f(y)=\lambda y$,
\begin{enumerate}
\item if $n>1$, we have
\bb\label{1.10}V_{1}=y\f{\p}{\p y}\ee
and
\bb\label{beta0}
V_{\be}=\be(x)\f{\p}{\p y},
\ee
where $\be$ is a solution of the linear equation
\bb\label{beta}
\be^{(2n)}+\lambda\be=0.
\ee
\item if $n=1$, we have $(\ref{1.10})$, $(\ref{beta0})$ and
\bb\label{eqn=1}
\ba{l}
\ds{V_{2}=\sin{(2\sqrt{\lambda}x)}\f{\p}{\p x}+\sqrt{\lambda}y\cos{(2\sqrt{\lambda}x)}\f{\p}{\p y}},\,\,\,
\ds{V_{3}=\cos{(2\sqrt{\lambda}x)}\f{\p}{\p x}-\sqrt{\lambda}y\sin{(2\sqrt{\lambda}x)}\f{\p}{\p y}},\\
\\
\ds{V_{4}=y\sin{(\sqrt{\lambda}x)}\f{\p}{\p x}+\sqrt{\lambda}y^2\cos{(\sqrt{\lambda}x)}\f{\p}{\p y}},\,\,\,
\ds{V_{5}=y\cos{(\sqrt{\lambda}x)}\f{\p}{\p x}-\sqrt{\lambda}y^2\sin{(\sqrt{\lambda}x)}\f{\p}{\p y}.}
\ea
\ee
\end{enumerate}
\end{enumerate}
\end{theorem}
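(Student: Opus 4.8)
The plan is to run the classical Lie algorithm. I would write the generator $X=\xi\,\p_x+\eta\,\p_y$ with $\xi=\xi(x,y)$, $\eta=\eta(x,y)$, prolong it to order $2n$, and impose the infinitesimal invariance criterion on the solution manifold of $(\ref{1.1})$, namely $\eta^{(2n)}+f'(y)\,\eta=0$ whenever $y^{(2n)}=-f(y)$. Using the prolongation identity $\eta^{(2n)}=D_x^{2n}(\eta-\xi y')+\xi y^{(2n+1)}$ together with the substitution $y^{(2n)}=-f$ and its differential consequences, the left-hand side becomes a polynomial in the intermediate derivatives $y',y'',\dots,y^{(2n-1)}$ whose coefficients are expressions in $\xi$, $\eta$ and the derivatives of $f$. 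Since $\xi$ and $\eta$ do not depend on these derivatives, I would split the condition by equating to zero the coefficient of each distinct monomial in $y',\dots,y^{(2n-1)}$; this bookkeeping is exactly what the technical results of Section~\ref{aux} are meant to package, so I would quote them rather than expand the prolongation by hand.

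First I would extract the constraints coming from the monomials in the intermediate derivatives. For $n\ge2$ these force $\xi_y=0$ and $\eta_{yy}=0$, and comparing the coefficients of the single derivatives $y^{(m)}$ for $2\le m\le 2n-1$ yields the relations $\xi'''=0$ and $\eta_y=\frac{2n-1}{2}\,\xi'+c$ for a constant $c$. Hence $\xi=c_0+c_1x+c_2x^2$ and $\eta=a(x)y+b(x)$ with $a=\frac{2n-1}{2}\xi'+c$. What survives is the derivative-free part of the determining equation, which after substituting $y^{(2n)}=-f$ reads
\[
a^{(2n)}y-a\,f+b^{(2n)}+2n\,\xi'\,f+f'(ay+b)=0 .
\]
This single relation is the classifying equation, and the whole theorem reduces to solving it case by case in $f$.

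I would then run through the admissible forms of $f$. For arbitrary $f$ the relation forces $a=b=0$ and $\xi'=0$, leaving only $X_1$. For $f=\lambda e^{\al y}$ the coefficient of $ye^{\al y}$ gives $a=0$, and then $b=-\tfrac{2n}{\al}\xi'$ with $\xi$ linear, producing $D_1$. For $f=\lambda y^p$ the coefficients of $y^p$ and $y^{p-1}$ give $b=0$ and $(p-1)a+2n\xi'=0$; matching $a=\frac{2n}{1-p}\xi'$ against $a=\frac{2n-1}{2}\xi'+c$ shows that the quadratic coefficient $c_2$ may be nonzero precisely when $\frac{2n}{1-p}=\frac{2n-1}{2}$, i.e. when $p=\frac{1+2n}{1-2n}$, which is exactly the branch that yields $X_2,X_3$; for every other $p$ only $D_p$ appears. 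For $f=\lambda$ the classifying equation becomes $b^{(2n)}=\lambda\big(c-\frac{2n+1}{2}\xi'\big)$, whose homogeneous solutions $b=x^j/j!$ give the $Z_j$ and whose particular solutions, together with the freedom in $c_0,c_1,c_2,c$, produce $Y_1,Y_2,Y_3$. Finally, for $f=\lambda y$ the $y$-linear and $y$-free parts decouple into $a^{(2n)}+2n\lambda\xi'=0$ and $b^{(2n)}+\lambda b=0$; the first forces $\xi'=0$ and $a$ constant, giving $X_1$ and $V_1=y\,\p_y$, while the second is exactly $(\ref{beta})$, giving the family $V_\be$.

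The main obstacle, and the point at which $n=1$ must be separated out, is that the reduction $\xi_y=0$ used above rests on having intermediate derivatives $y^{(m)}$ with $2\le m\le 2n-1$ genuinely distinct from $y'$, which fails for a second-order equation. For $n=1$ the determining equation retains its classical cubic dependence on $y'$, so $\xi$ may depend on $y$, and the linearizable cases $f=\lambda$ and $f=\lambda y$ acquire the full eight-dimensional $\mathfrak{sl}(3,\R)$ algebra, accounting for the additional generators $Y_4,Y_5$ and $V_2,\dots,V_5$. I would therefore treat $n=1$ by the standard second-order analysis (or by reference to Lie's classification) and present the uniform computation above for $n\ge2$, checking that the exceptional-power branch $p=\frac{1+2n}{1-2n}$ and the linear branch $f=\lambda y$ are the only ones producing an enlargement of dimension three or more.
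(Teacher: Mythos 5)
Your proposal is correct and takes essentially the same route as the paper: for $n>1$ the paper likewise reduces to $\xi$ quadratic in $x$ and $\eta=\al(x)y+\be(x)$ with $\al=\f{2n-1}{2}\xi'+k_1$ (Lemmas \ref{lema1}--\ref{lema5}, quoting \cite{bluman} for $\xi_y=0$ and $\eta_{yy}=0$ where you re-derive them from the mixed monomials), its compatibility condition (\ref{3.13}) is exactly your derivative-free classifying relation, the case split in $f$ and the exceptional-power matching $\f{2n}{1-p}=\f{2n-1}{2}$ are the same, and $n=1$ is handled by the same separate classical second-order computation that you defer to. Incidentally, your constant-$f$ relation $b^{(2n)}=\lambda\bigl(c-\f{2n+1}{2}\xi'\bigr)$ reproduces the paper's (\ref{4.2.4.1}) and, with $c=1$ and $\xi=0$, yields the generator $\bigl(y+\lambda\f{x^{2n}}{(2n)!}\bigr)\f{\p}{\p y}$, confirming that the minus sign printed in (\ref{l3}) is a typo.
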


\begin{remark}
We observe that only for the linear cases there is a sensible difference between the values $ n=1$ or $n>1$. 
\end{remark}

\begin{remark}
For $p=(1+2n)/(1-2n)$ the operator (\ref{1.6}) coincides with the operator (\ref{1.7}).
\end{remark}

\begin{remark}
Although in Theorem \ref{teo1} we presented the symmetries of the linear cases, in the remaining of the paper we shall not consider them since we are mainly interested in nonlinear phenomena. However, we invite the interested reader to consult \cite{fa1,fa2} for a wider discussion on linear ordinary differential equations and Lie symmetries.
\end{remark}

\begin{remark}
Equation (\ref{beta0}) corresponds to a family of $2n$ Lie point symmetry generators parametrized by the $2n$ linearly independent solutions of the equation (\ref{beta}).
\end{remark}

\begin{remark}
The group classification of the case $n=1$ could be done by using computational packages, see, for instance, \cite{stelios1, stelios2}, and this can also be done for a given $n$. Even though case $n=1$ was already widely considered in the literature, we present it in the paper for sake of completeness.
\end{remark}

\begin{remark}\label{rem6}
The cases considered in Theorem \ref{teo1} are, in fact, the complete group classification modulo the equivalence transformations $\bar{x}=a_{1}x+a_{2}$, $\bar{y}=b_{1}y+b_{2}$, where $a_{1},\,a_{2},\,b_{1}$ and $b_2$ are constants such that $a_{1}b_{1}\neq0$, and $\bar{f}=(b_{1}/a_{1}^{2n})f$. We, however, will not proceed with further information on these transformations in this paper once we are interested in understanding the one-dimensional version of the results obtained in \cite{sv,yu1}. Moreover, it is well known that for semilinear equations such as (\ref{1.1}), in order to carry out a complete group classification, up to equivalence transformations, one must take the functions $f$ as considered in the Theorem \ref{1.1}, see \cite{bok1,yu1,yi1,yi4,ipm,sv}. They, in fact, arise from a compatibility condition that will be deduced on section \ref{aux}.\\
\end{remark}

Theorem \ref{teo1} is the one-dimensional version of the group classification carried out by \cite{sv} for the 
poliharmonic equation
\bb\label{1.11}
(-1)^{n}\Delta^{n} u+f(u)=0.
\ee
 In (\ref{1.11}), $\Delta$ is the Laplacian operator, $\Delta^{n}=\Delta^{n-1}\Delta,\,n\in\N$ and $u=u(x),\,x\in\R^k, k \geq 2$. Therefore, one can consider Theorem \ref{teo1} as the extension, to the one-dimensional case, of Svirshcheviskii's results in early 90's.

Later, in \cite{yu1}, the Noether symmetries of (\ref{1.11}) were studied. In the mentioned reference it was shown that the Lie point symmetry group the equation (\ref{1.11}), with $f(u)=u^{p}$, is a Noether symmetry group if and only if
\bb\label{1.12}
p=\f{k+2n}{k-2n}.
\ee

This case, for the poliharmonic equation (\ref{1.11}), corresponds to the largest symmetry Lie algebra for a nonlinear function $f(u)$ with power nonlinearities. Such value of the power is called {\it critical exponent}. For further details, see \cite{yu2}. 

For most cases, the Noether symmetry group is a proper subgroup of the Lie point symmetry group of the corresponding Euler-Lagrange equations. However, there are some examples in the literature where both groups coincide. In \cite{yu2} this fact was firstly discussed. Later, in \cite{yu3}, this point was retaken and many examples were analysed. A considerable number of the examples discussed were related with partial differential equations. However, since that work, more differential equations having the same property have been communicated in the literature.

For instance, in \cite{yi1}, a complete group classification of the semilinear Kohn-Laplace equations was carried out. For that considered family of equations, when the nonlinear term is a power nonlinearity with the critical exponent, all Lie point symmetries are Noether symmetries and consequently, from Noether theorem, it is possible to find conservation laws for them, see \cite{yi2} and \cite{yi3}.

In \cite{yi4} it was shown that for certain semilinear equations on manifolds involving the Laplace-Beltrami operator, the same phenomena occurs, although in that case some restrictions arise from the scalar curvature of the manifold. In \cite{igorjmaa} a concrete application of those results is done and the reader can compare with the classical approach used in \cite{azad}.

In \cite{yi5} the authors studied a family of bidimensional Lane-Emden systems and, for that family, there is a case in which the Lie and Noether symmetry groups coincide. In the case mentioned the power nonlinearities have a relation that the authors called {\it critical line}. Some other examples of the same property can be found in \cite{yi6,yg1,yg2,gilli,renato, moises}.

On one hand, it is well known that the Lie symmetries of equation (\ref{1.2}) are also Noether symmetries and the same property also holds for equation (\ref{1.3}). On the other hand, if one takes, respectively, $n=k=1$ or $k=1$, $n=2$, $f(u)=u^{p}$ and $p$ as in (\ref{1.12}), equation (\ref{1.11}) becomes equation (\ref{1.2}) and (\ref{1.3}), respectively. 

Motivated by these facts, our next result is related with Noether symmetries, which can be stated as the following:
\begin{theorem}\label{teo2}
The Lie point symmetry generator $(\ref{1.6})$ is a Noether symmetry operator of the equation
\bb\label{1.13}
y^{(2n)}+\lambda y^{p}=0,
\ee
with $\lambda\neq 0$, if and only if 
\bb\label{1.14}
p=\f{1+2n}{1-2n}.
\ee
\end{theorem}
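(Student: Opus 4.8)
The plan is to apply the Noether symmetry criterion to the specific scaling generator $(\ref{1.6})$ and extract the condition on $p$. First I would identify a Lagrangian $L$ whose Euler--Lagrange equation is $(\ref{1.13})$. Since the equation is of order $2n$, the natural Lagrangian involves derivatives up to order $n$; the standard choice is
\bb\label{lagr}
L=\f{1}{2}\left(y^{(n)}\right)^2-(-1)^n\f{\lambda}{p+1}\,y^{p+1},
\ee
so that the variational derivative $\delta L/\delta y=(-1)^n\left(y^{(2n)}+\lambda y^p\right)$ reproduces $(\ref{1.13})$ up to a nonzero sign. One must check that $p\neq -1$ so that the potential term is well defined; the exceptional value $(\ref{1.14})$ satisfies this for all $n\geq 1$, so no obstruction arises there.

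Next I would recall the Noether condition: a generator $X=\xi\,\p_x+\eta\,\p_y$ (here $\xi=x$, $\eta=\f{2n}{1-p}\,y$) is a Noether symmetry of $L$ if its appropriate prolongation $X^{[n]}$ satisfies
\bb\label{noethcond}
X^{[n]}L+L\,D_x\xi=D_x B
\ee
for some gauge function $B=B(x,y,y',\dots)$, where $D_x$ is the total derivative operator. The key computation is to prolong $X$ up to order $n$ and evaluate the left-hand side of $(\ref{noethcond})$ on the Lagrangian $(\ref{lagr})$. Because the generator is a pure scaling, each prolonged coefficient $\eta^{(k)}$ acting on $y^{(k)}$ produces a multiple of $y^{(k)}$, and I expect the two pieces of $L$ to transform homogeneously: the kinetic term $\tfrac12(y^{(n)})^2$ will scale with one weight and the potential term $y^{p+1}$ with another. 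The Noether condition $(\ref{noethcond})$ forces these weights, together with the $L\,D_x\xi=L$ contribution, to combine into a total $x$-derivative.

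The main obstacle — and the heart of the proof — is showing that the left-hand side of $(\ref{noethcond})$ is a total derivative \emph{only} when the two homogeneity weights are compatible, and that this compatibility is precisely $(\ref{1.14})$. Concretely, I would compute the scaling weight of $\tfrac12(y^{(n)})^2$ under $X$ as $2\,\mathrm{wt}(y)-2n+1$ (the $-2n$ from the $n$ derivatives each lowering the $y$-weight by one under $\xi=x$, the $+1$ from the $D_x\xi=1$ term) and the weight of $y^{p+1}$ as $(p+1)\,\mathrm{wt}(y)+1$, where $\mathrm{wt}(y)=\f{2n}{1-p}$. For the whole of $X^{[n]}L+L$ to be a total derivative, these two homogeneous blocks must either match or individually be total derivatives; matching the weights yields a single equation in $p$ and $n$ that I expect to reduce to $(\ref{1.14})$ after simplification. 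The converse direction is the easier half: substituting $p=\f{1+2n}{1-2n}$ makes $\mathrm{wt}(y)=\f{2n-1}{2}$, the two weights coincide, and one exhibits the explicit gauge term $B$ (a boundary polynomial in $x,y,y',\dots,y^{(n-1)}$) making $(\ref{noethcond})$ hold, thereby confirming that $(\ref{1.6})$ is genuinely a Noether symmetry. I would lean on the technical prolongation identities established in section \ref{aux} to avoid grinding through the order-$n$ prolongation by hand.
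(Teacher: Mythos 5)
Your overall framing --- the $n$th-order Lagrangian, the Noether condition $X^{[n]}L+L\,D_x\xi=D_xB$, and the exploitation of the pure-scaling structure of $(\ref{1.6})$ --- matches the paper's setup. But the step you call the heart of the proof fails: the two homogeneity weights you propose to match are \emph{identically equal for every} $p$. With $\mathrm{wt}(y)=\f{2n}{1-p}$ one has
\[
\bigl(2\,\mathrm{wt}(y)-2n+1\bigr)-\bigl[(p+1)\,\mathrm{wt}(y)+1\bigr]=(1-p)\,\mathrm{wt}(y)-2n=0,
\]
and this is no accident: equality of the two weights is precisely the statement that $D_p$ is a Lie \emph{point} symmetry of $(\ref{1.13})$, which Theorem \ref{teo1} guarantees for all $p\neq0,1$. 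So ``matching the weights'' is a vacuous condition and yields no constraint on $p$; carried out as written, your plan would suggest $D_p$ is a Noether symmetry for every $p$. The implication you invoke is also backwards: when the common weight $c$ is nonzero, the left-hand side of the Noether condition equals $c\,{\cal L}$, which is \emph{not} a total derivative --- whereas you treat coinciding weights as the favourable case in your converse.

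The missing idea is how to exclude $c\,{\cal L}=D_xB$ for $c\neq0$. The paper's computation (one line, using the prolongation $(\ref{2.1.4})$ on the Lagrangian $(\ref{2.2.5})$ with $F(y)=(-1)^n\f{\lambda}{p+1}y^{p+1}$) collapses your two ``blocks'' into one:
\[
D_p{\cal L}+{\cal L}\,D\xi=\f{2n+1+p(2n-1)}{1-p}\,{\cal L},
\]
and since $\f{\de{\cal L}}{\de y}=(-1)^n\bigl(y^{(2n)}+\lambda y^p\bigr)\not\equiv0$ for $\lambda\neq0$, the Lagrangian is not a null Lagrangian, so a nonzero multiple of it cannot be a total derivative; the Noether condition therefore forces $2n+1+p(2n-1)=0$, which is exactly $(\ref{1.14})$. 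Conversely, at that value the left-hand side vanishes identically, so $D_p$ is a \emph{variational} symmetry with gauge $B=0$: no ``boundary polynomial'' gauge is needed here (the nontrivial gauge arises only for $X_3$ in Theorem \ref{teo3}). Two minor slips besides: the potential term in your Lagrangian should carry the sign $+(-1)^n\f{\lambda}{p+1}y^{p+1}$, since with your minus sign one gets $\f{\de L}{\de y}=(-1)^n\bigl(y^{(2n)}-\lambda y^p\bigr)$, not the variational derivative you claim; and the $p=-1$ caveat you raise, while harmless, is never triggered because $\f{1+2n}{1-2n}=-1$ has no solution in $n$.
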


We observe that (\ref{1.14}) can easily be obtained from (\ref{1.12}) taking $k=1$, which implies that in this case, equation (\ref{1.1}) inherits similar properties with respect to the Noether symmetries of the poliharmonic equation (\ref{1.11}). In fact, we can formulate another common result, inherited by (\ref{1.1}), which can be announced in the

\begin{theorem}\label{teo3}
All Lie point symmetries of the equation $(\ref{1.13})$, with $p$ given by $(\ref{1.14})$, are Noether symmetries.
\end{theorem}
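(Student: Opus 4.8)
The plan is to realize equation (\ref{1.13}) as an Euler--Lagrange equation and then test the three generators of its Lie point symmetry algebra against the Noether (divergence) condition. First I would write down the Lagrangian
\bb
L=\f{1}{2}\left(y^{(n)}\right)^2+(-1)^n\f{\lambda}{p+1}\,y^{p+1},
\ee
and verify that its Euler--Lagrange equation $\sum_{k=0}^{n}(-1)^k D_x^k\big(\p L/\p y^{(k)}\big)=0$ is exactly (\ref{1.13}); the critical exponent (\ref{1.14}) gives $p+1=2/(1-2n)\neq 0$, so $L$ is well defined. By Theorem \ref{teo1}, for $f(y)=\lambda y^{(1+2n)/(1-2n)}$ the Lie point symmetry algebra is spanned by $X_{1}$, $X_{2}$ and $X_{3}$ of (\ref{1.4}), (\ref{1.7}) and (\ref{1.8}). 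Since $D_{p}$ coincides with $X_{2}$ at this power, Theorem \ref{teo2} already shows that $X_{2}$ is a Noether symmetry, so it remains only to treat $X_{1}$ and $X_{3}$.

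The generator $X_{1}=\p/\p x$ is immediate: its prolongation acts on the autonomous Lagrangian as $\p L/\p x=0$, and $D_x\xi=0$, so the divergence condition $\mathrm{pr}^{(n)}X(L)+L\,D_x\xi=D_x B$ holds trivially with $B=0$.

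The heart of the argument is $X_{3}=x^{2}\p/\p x+(2n-1)xy\,\p/\p y$. Here I would compute the prolonged coefficients from the recurrence $\eta^{(k)}=D_x\eta^{(k-1)}-y^{(k)}D_x\xi$ with $\xi=x^{2}$; a short induction should give $\eta^{(k)}=k(2n-k)\,y^{(k-1)}+(2n-1-2k)\,x\,y^{(k)}$, and in particular the only coefficient entering the kinetic term,
\bb
\eta^{(n)}=n^{2}\,y^{(n-1)}-x\,y^{(n)}.
\ee
Substituting into $\mathrm{pr}^{(n)}X_{3}(L)+L\,D_x(x^{2})$, the terms proportional to $y^{p+1}$ collect with a factor proportional to $(2n-1)+(1-2n)=0$ and cancel identically --- this is precisely where the critical exponent (\ref{1.14}) is used --- while the kinetic terms reduce to $\eta^{(n)}y^{(n)}+x(y^{(n)})^{2}=n^{2}\,y^{(n-1)}y^{(n)}=D_x\!\big(\tfrac{n^{2}}{2}(y^{(n-1)})^{2}\big)$. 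Hence $X_{3}$ is a divergence symmetry with gauge $B=\tfrac{n^{2}}{2}(y^{(n-1)})^{2}$.

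Finally I would assemble the conclusion using linearity: the condition $\mathrm{pr}^{(n)}X(L)+L\,D_x\xi=D_x B$ is linear in $(\xi,\eta,B)$, so the Noether symmetries form a linear subspace; as $X_{1}$, $X_{2}$ and $X_{3}$ are all Noether and span the full Lie point symmetry algebra of (\ref{1.13}), every Lie point symmetry is a Noether symmetry. The main obstacle is purely computational, namely establishing the closed form of $\eta^{(n)}$ and checking the two cancellations; but the potential-term cancellation is forced by the critical exponent and the kinetic-term cancellation $-x(y^{(n)})^{2}+x(y^{(n)})^{2}=0$ is exactly what leaves an exact derivative, so I expect no genuine difficulty beyond careful bookkeeping of the prolongation.
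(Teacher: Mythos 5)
Your proposal is correct and follows essentially the same route as the paper: the same Lagrangian (the paper's (\ref{4.2})), the same closed-form prolongation coefficient $\zeta_{k}=k(2n-k)y^{(k-1)}+(2n-2k-1)xy^{(k)}$ (the paper's (\ref{2.1.5})), the same reduction $X_{3}{\cal L}+{\cal L}D\xi=n^{2}y^{(n-1)}y^{(n)}=D\bigl[\tfrac{n^{2}}{2}(y^{(n-1)})^{2}\bigr]$ with the same gauge, and the same disposal of $X_{1}$ trivially and of $X_{2}$ via Theorem \ref{teo2}. Your explicit induction for $\zeta_{k}$ and the closing linearity remark merely make precise what the paper leaves implicit, so there is nothing to flag.
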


On one hand, Theorem \ref{teo3} is the one-dimensional version of analogous Bozhkov's results concerning equation (\ref{1.11}), see \cite{yu1}. On the other hand, the same theorem generalizes the results of \cite{go,bok1,bok2} for semilinear ODEs admitting the $\frak{sl}(2,\R)$ symmetry Lie algebra. Therefore, this paper not only extends the results of \cite{sv,yu1} to the equation (\ref{1.1}), but also generalizes the results of \cite{bok1,bok2,ipm} regarding fourth-order equation to an arbitrary even-order semilinear ODEs.

With respect to the other nonlinear cases, except for the operator (\ref{1.4}), all remaining generators are not Noether symmetries.

\section{Lie symmetries and the Noether theorem}\label{rev}

Here we present the necessary elements in order to prove Theorem \ref{teo1}.  Algebraically speaking, a Lie point symmetry of an ordinary differential equation is a one-parameter group of $C^\infty$-automorphisms of the plane preserving the set of integrals of the considered equation.

For a wider and deeper discussion on this subject, the reader is encouraged to check references \cite{bk,ba,i,i1,ol}. These books present the standard construction of the theory. A different viewpoint, more algebraic and focused on ordinary differential equations, can be found in \cite{draisma,oudshoorn}.

In what follows, all functions are assumed to be functions on the universal space ${\cal A}$ of the modern group analysis, see \cite{i1,ib2} and references therein.

\subsection{Lie-B\"acklund operators and symmetries}

An operator
\bb\label{2.1.1}
X=\xi\f{\p}{\p x}+\eta\f{\p}{\p y}+\sum_{i=1}^{\infty}\zeta_{i}\f{\p}{\p y^{(i)}}
\ee
is called {\it Lie-B\"acklund operator} if $\xi=\xi(x,y,y',y'',\cdots),\,\eta=\eta(x,y,y',y'',\cdots)$ and $\zeta_{k}=D(\zeta_{k-1})-y^{(k)}D\xi, \,k\geq 1$, where $\zeta_{0}:=\eta$ and
\bb\label{2.1.2}
D=\f{\p}{\p x}+y'\f{\p}{\p y}+y''\f{\p}{\p y'}+y'''\f{\p}{\p y''}+\cdots
\ee
is the total derivative operator. In this case, if (\ref{2.1.1}) is a Lie-B\"acklund operator, it is usually written in the abbreviated form
\bb\label{2.1.3}X=\xi\f{\p}{\p x}+\eta\f{\p}{\p y}\ee
and the remaining terms are understood.  However, sometimes it is useful to consider the $p-th$ extension of operator  (\ref{2.1.3}), given by 
$$X=\xi\f{\p}{\p x}+\eta\f{\p}{\p y}+\zeta_{1}\f{\p}{\p y'}+\cdots+\zeta_{p}\f{\p}{\p y^{(p)}}.$$

\begin{example}\label{ex1} Operators $(\ref{1.6})$ and $(\ref{1.8})$ are the abbreviated form of the Lie-B\"acklund operators
\bb\label{2.1.4}
D_{p}=x\f{\p}{\p x}+\f{2n}{1-p}y\f{\p}{\p y}+\sum_{k=1}^{\infty}\f{2n+k(p-1)}{1-p}y^{(k)}\f{\p}{\p y^{(k)}}
\ee
and
\bb\label{2.1.5}
X_{3}=x^2\f{\p}{\p x}+(2n-1)xy\f{\p}{\p y}+\sum_{k=1}^{\infty}\left[(2kn-k^2)y^{(k-1)}+(2n-2k-1)xy^{(k)}\right]\f{\p}{\p y^{(k)}}.
\ee
\end{example}

Given an ordinary differential equation $F(x,y,y',\cdots,y^{(n)})=0$, a Lie-B\"acklund operator (\ref{2.1.3}) is called {\it Lie point symmetry generator} if $\xi=\xi(x,y)$, $\eta=\eta(x,y)$ and, for a certain function $\al$ depending on $x,\,y$ and its derivatives, the following identity holds:
\bb\label{2.1.5a}
XF=\al F.
\ee



\begin{remark}
In practical terms, it is not necessary to consider the formal sum $(\ref{2.1.1})$ to obtain the Lie point symmetries of a certain equation. In fact, if the investigated equation is of order $n$, it is enough to consider the $n-th$ extension of the generator and then apply condition $(\ref{2.1.5a})$, which is called {\it invariance condition}. From this constraint, an overdetermined linear system of equations for the coefficients $\xi$ and $\eta$, called {\it determining equations}, will arrise. The solutions of this system give the basis of the Lie point symmetry generators for the considered equation.\\
\end{remark}

Let $X$ be a Lie point symmetry generator of an ordinary differential equation $F=0$. The corresponding Lie point symmetry is a local one-parameter group of transformations $T_{\eps}$ given by 
$$e^{\eps X}(x,y)=\left(x+\sum_{j=1}^{\infty}\f{\eps^j}{j!}X^j x,y+\sum_{j=1}^{\infty}\f{\eps^j}{j!}X^j y\right)=:(\bar{x},\bar{y}),$$
where $\eps$ is taken in a neighbourhood of $0$. 


\subsection{Noether theorem}

The formal sum
\bb\label{2.2.1}
\f{\de}{\de y}=\sum_{j=0}^{\infty}(-1)^j D^j \f{\p}{\p y^{(j)}},
\ee
where $y^{(0)}:=y,\,D^1:=D,\,D^2:=DD,\,D^3:=DDD,\cdots$, is called Euler-Lagrange operator. For each Lie-B\"acklund operator (\ref{2.1.3}) one can associate the Noether operator
\bb\label{2.2.2}
N=\xi+W\f{\de}{\de y'}+\sum_{j=1}^{\infty}D^j(W)\f{\de}{\de y^{(j+1)}},
\ee
where $W:=\eta-y'\xi$.

\begin{example}\label{ex4}
Consider the Lie-B\"acklund operators $(\ref{1.4}),\,(\ref{1.7})$ and $(\ref{1.8})$. Then the Noether operators associated with them are given, respectively, by
\bb\label{n1}
N_{1}=1-\sum_{k=0}^{\infty}y^{(k+1)}\f{\de}{\de y^{(k+1)}},
\ee
\bb\label{n2}
N_{2}=x+\sum_{k=0}^{\infty}\left(\f{2n-2k-1}{2}y^{(k)}-xy^{(k+1)}\right)\f{\de}{\de y^{(k+1)}}
\ee
and
\bb\label{n3}
\ba{lcl}
N_{3}&=&\ds{x^2+\sum_{k=0}^{\infty}\left[k(2n-k)y^{(k-1)}+(2n-2k-1)xy^{(k)}-x^2y^{(k+1)}\right]\f{\de}{\de y^{(k+1)}}}.
\ea
\ee
\end{example}

Ibragimov (see \cite{i1}, Section 8.4, for further details) proved that the Euler-Lagrange operator (\ref{2.2.1}), Lie-B\"acklund (\ref{2.1.3}) and the Noether operators (\ref{2.2.2}) satisfy the {\it Noether identity}
\bb\label{2.2.3}
X+D(\xi)=W\f{\de}{\de y}+DN.
\ee 

An equation $F=0$ has variational formulation if there exists a function ${\cal L}\in {\cal A}$, called {\it Lagrangian}, such that
$$F=\f{\de{\cal L}}{\de y}.$$
In this case, equation
\bb\label{2.2.4}\f{\de{\cal L}}{\de y}=0\ee
is called {\it Euler-Lagrange equation}.

\begin{example}\label{ex4'}
Equation $(\ref{1.1})$ has variational formulation. In fact, consider the Lagrangian
\bb\label{2.2.5}
{\cal L}=\f{(y^{(n)})^2}{2}+F(y),
\ee
where $F'=(-1)^nf$. Then 
$$\f{\de{\cal L}}{\de y}=(-1)^n\left[y^{(2n)}+f(y)\right].$$
\end{example}

\begin{definition}\label{defn}
An operator  (\ref{2.1.3}) is called {\it Noether symmetry operator} of the functional
$$J[u]=\int_{\R}{\cal L}\,dx,$$
if $X{\cal L}+{\cal L}D\xi=DA$, for a certain potential $A\in{\cal A}$. If $A=0$ then the generator $(\ref{2.1.3})$ is called {\it variational symmetry operator}, while for $A\neq const$, it is termed as {\it divergence symmetry operator}. 
\end{definition}

\begin{remark}
Any Noether symmetry is a Lie point symmetry of the corresponding Euler-Lagrange equation,
see \cite{i1,ol}. The potential $A$, also called {\it gauge}, arise from contribution of boundary terms. For further details on Noether symmetries, see \cite{i1,ol}.\\
\end{remark}

Noether's theorem can now be formulated for ordinary differential equations:
\begin{theorem}{\tt (Noether theorem)}\\
For any Noether symmetry $(\ref{2.1.3})$ of the Euler-Lagrange equation $(\ref{2.2.4})$, the quantity
\bb\label{2.2.6}
I=N({\cal L})-A,
\ee
called first integral, is conserved on the solutions of $(\ref{2.2.4})$.
\end{theorem}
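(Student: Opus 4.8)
The plan is to derive the conservation statement directly from the three ingredients already assembled in this subsection: the Noether identity (\ref{2.2.3}), the definition of a Noether symmetry, and the variational characterisation of the equation. First I would apply the operator identity (\ref{2.2.3}) to the Lagrangian $\cal L$. Reading $D(\xi)$ as a multiplication operator and $DN$ as the composition of total differentiation with the Noether operator (\ref{2.2.2}), this produces
$$X{\cal L}+{\cal L}\,D(\xi)=W\,\f{\de{\cal L}}{\de y}+D\big(N({\cal L})\big).$$

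Next I would use the hypothesis. By Definition \ref{defn}, the assumption that $X$ is a Noether symmetry means precisely that $X{\cal L}+{\cal L}D\xi=DA$ for the associated potential $A$. Substituting this into the left-hand side above and rearranging gives
$$D\big(N({\cal L})-A\big)=-\,W\,\f{\de{\cal L}}{\de y}.$$

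Finally I would pass to solutions. Solutions of the Euler--Lagrange equation (\ref{2.2.4}) are by definition those on which $\de{\cal L}/\de y=0$, so the right-hand side vanishes identically there. Hence $D\big(N({\cal L})-A\big)=0$ on solutions; since the total derivative $D$ restricts to $d/dx$ when evaluated along a solution curve, this is exactly the assertion that $I=N({\cal L})-A$ is constant on solutions, i.e.\ a first integral. In the concrete setting of Example \ref{ex4'}, where $\de{\cal L}/\de y=(-1)^n[y^{(2n)}+f(y)]$, this reproduces the vanishing of the right-hand side along solutions of (\ref{1.1}).

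I do not expect a genuine obstacle in this argument: its whole force is borrowed from the Noether identity (\ref{2.2.3}), whose validity is attributed to Ibragimov and which I am free to invoke. The only step demanding care is the first one, namely correctly evaluating the formal operator identity on the concrete function $\cal L$ and keeping track of which terms act by multiplication and which by differentiation. Once that bookkeeping is settled, the Noether symmetry condition and the vanishing of $\de{\cal L}/\de y$ on solutions make the conclusion immediate.
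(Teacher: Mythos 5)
Your proposal is correct and follows essentially the same route as the paper's own proof: apply the Noether identity (\ref{2.2.3}) to ${\cal L}$, substitute the Noether symmetry condition $X{\cal L}+{\cal L}D\xi=DA$, and observe that $W\,\de{\cal L}/\de y$ vanishes on solutions of (\ref{2.2.4}), giving $D\bigl(N({\cal L})-A\bigr)=0$. The only cosmetic difference is that you rearrange before restricting to solutions while the paper restricts first; the content is identical.
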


\begin{proof} 
Applying the Noether identity to the Lagrangian ${\cal L}$, we have 
$$X{\cal L}+{\cal L}D(\xi)=W\f{\de {\cal L}}{\de y}+DN({\cal L}).$$

On one hand, since $X$ is a Noether symmetry, there exists a function $A\in{\cal A}$, eventually $0$, such that $X{\cal L}+{\cal L}D(\xi)=DA$. On the other hand, on the solutions of the Euler-Lagrange equation (\ref{2.2.4}), one can write
$D(A)=X{\cal L}+{\cal L}D(\xi)=DN({\cal L})$, or $D(N({\cal L})-A)=0.$
Defining $I$ by (\ref{2.2.6}), we can easily see that $DI=0$ on the solutions of (\ref{2.2.5}), that is, $I$ is a first integral, or a conserved quantity, of the Euler-Lagrange equation.
\end{proof}

We would like to recall that constants are trivial first integrals of any equation. Therefore, two first integrals are equivalent if they differ by a constant.

To finish this section, we write (\ref{2.2.6}) explicitly, assuming that ${\cal L}={\cal L}(x,y,y',\cdots,y^{n})$:
\bb\label{2.2.7}
I=\xi{\cal L}+(\eta-y'\xi)\f{\de{\cal L}}{\de y'}+\sum_{j=1}^{n-1}D^{j}(\eta-y'\xi)\f{\de {\cal L}}{\de y^{(j+1)}}-A.
\ee

\section{Auxiliary results}\label{aux}

In this section we prove some technical results that will be useful to prove the main statements of this paper regarding the case $n>1$. Therefore, in the whole section it is presupposed this hypothesis. Although technically a crucial section, the reading of this part can be avoided and we believe that the interested reader could omit it while reading the paper. In fact, the reader can directly go to section \ref{main} and, once one of the results here presented is invoked, one can only consult the requested point. 

However, for those who appreciate technical results or enjoy some manipulation, we begin with
\begin{lemma}\label{lema1}
Let $(\ref{2.1.3})$ be a Lie point symmetry generator of $(\ref{1.1})$. Then its $2n-th$ extension is given by
\bb\label{3.1}
X=\xi(x)\f{\p}{\p x}+[\al(x)y+\be(x)]\f{\p}{\p y}+\sum_{j=1}^{2n}\zeta_{j}\f{\p}{\p y^{(j)}},
\ee
where
\bb\label{3.2}
\zeta_{p}=\be^{(p)}+\al^{(p)}y+\sum_{j=1}^{p}\left[\binom{p}{j}\al^{(p-j)}-\binom{p}{j-1}\xi^{(p-j+1)}\right]y^{(j)}.
\ee
\end{lemma}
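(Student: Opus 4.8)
The plan is to establish the form of the coefficients $\zeta_p$ by induction on $p$, using the prolongation formula $\zeta_k = D(\zeta_{k-1}) - y^{(k)} D\xi$ that defines the Lie--B\"acklund extension, together with the structural restrictions already known from the symmetry analysis. First I would record the two input facts. Since $(\ref{2.1.3})$ is a genuine \emph{Lie point symmetry} of $(\ref{1.1})$, its coefficients satisfy $\xi = \xi(x,y)$ and $\eta = \eta(x,y)$; and for equations of the form $(\ref{1.1})$ the determining equations force $\xi$ to be independent of $y$ and $\eta$ to be affine in $y$, so that $\xi = \xi(x)$ and $\eta = \alpha(x)y + \beta(x)$. (This linearity in $y$ is exactly the compatibility condition alluded to in Remark \ref{rem6} and deduced in this section, so I would either cite it or derive it from the top-order determining equation before starting the induction.) With $\xi$ free of $y$, the total derivative simplifies to $D\xi = \xi'(x)$, which is the feature that makes the closed form $(\ref{3.2})$ attainable.

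Next I would set up the induction. The base case $p=1$ is immediate: $\zeta_1 = D(\eta) - y' D\xi = (\alpha' y + \beta') + \alpha y' - \xi' y'$, which matches $(\ref{3.2})$ once one checks that the single $j=1$ term $\bigl[\binom{1}{1}\alpha^{(0)} - \binom{1}{0}\xi^{(1)}\bigr]y' = (\alpha - \xi')y'$ reproduces the $y'$-coefficient. For the inductive step I would assume $(\ref{3.2})$ holds for $\zeta_{p}$ and compute $\zeta_{p+1} = D(\zeta_p) - y^{(p+1)}\xi'$. Applying $D$ to the assumed expression for $\zeta_p$ produces three kinds of terms: the derivatives of the purely $x$-dependent pieces, giving $\beta^{(p+1)} + \alpha^{(p+1)}y$ plus an extra $\alpha^{(p)}y'$ from differentiating $\alpha^{(p)}y$; the terms where $D$ hits a coefficient $\binom{p}{j}\alpha^{(p-j)} - \binom{p}{j-1}\xi^{(p-j+1)}$, raising each derivative order by one; and the terms where $D$ hits $y^{(j)}$, turning it into $y^{(j+1)}$. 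Then subtracting $y^{(p+1)}\xi'$ supplies the final highest-order correction.

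The main obstacle, and the only real content, is the bookkeeping that reassembles these contributions into the single binomial-coefficient pattern of $(\ref{3.2})$ at level $p+1$. The key identity I would invoke is Pascal's rule $\binom{p}{j} + \binom{p}{j-1} = \binom{p+1}{j}$: after collecting, the coefficient of a generic interior derivative $y^{(j)}$ for $2 \le j \le p$ receives one $\alpha$-contribution of shape $\binom{p}{j}\alpha^{(p+1-j)}$ (from differentiating the coefficient of $y^{(j)}$ in $\zeta_p$) and another of shape $\binom{p}{j-1}\alpha^{(p+1-j)}$ (from the $D$ acting on $y^{(j-1)}$ lower down), and Pascal's rule merges them into $\binom{p+1}{j}\alpha^{(p+1-j)}$; the $\xi$-contributions combine identically into $-\binom{p+1}{j-1}\xi^{(p+2-j)}$. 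I would treat the boundary indices $j=1$ and $j=p+1$ separately, checking that the stray $\alpha^{(p)}y'$ term and the $-\xi' y^{(p+1)}$ term fit the $j=1$ and $j=p+1$ instances of the formula respectively (using $\binom{p}{0}=\binom{p+1}{0}=1$ and $\binom{p}{p}=\binom{p+1}{p+1}=1$). Once these edge cases are verified to align with the general binomial expression, the induction closes and $(\ref{3.2})$ holds for all $1 \le p \le 2n$.
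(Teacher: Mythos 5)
Your proposal is correct and takes essentially the same route as the paper: the paper likewise first fixes $\xi=\xi(x)$ and $\eta=\al(x)y+\be(x)$ (citing Bluman (1990) for this structural fact rather than rederiving it from the top-order determining equations, as you offer to do) and then obtains (\ref{3.2}) by induction on $k$ via the recursion $\zeta_{k}=D(\zeta_{k-1})-y^{(k)}D\xi$. Your explicit bookkeeping with Pascal's rule and the boundary cases $j=1$ and $j=p+1$ merely fills in the computation the paper compresses into ``using induction over $k$, we can easily conclude (\ref{3.2})''.
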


\begin{proof}
From \cite{bluman}, it follows that $\xi=\xi(x)$ and $\eta=\al(x)y+\be(x)$, for certain functions $\xi,\,\al$ and $\be$ of $x$. Substituting these expressions into 
\bb\label{3.3}
\zeta_{k}=D(\zeta_{k-1})-y^{(k)}D\xi
\ee
and using induction over $k$, we can easily conclude (\ref{3.2}).
\end{proof}





\begin{lemma}\label{lema4}
The determining equations of $(\ref{1.1})$ are
\bb\label{3.6}
\lambda=\al-2n\xi',
\ee
\bb\label{3.7}
\binom{2n}{k}\al^{(2n-k)}-\binom{2n}{k-1}\xi^{(2n-k+1)}=0,\,\,\,\,1\leq k<2n,
\ee
\bb\label{3.8}
(\al y+\be)f'(y)+\be^{(2n)}+\al^{(2n)}y=\lambda f(y).
\ee
\end{lemma}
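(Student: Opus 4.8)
The plan is to derive the three determining equations directly from the invariance condition $(\ref{2.1.5a})$ applied to $F := y^{(2n)} + f(y)$, using the explicit form of the $2n$-th extension supplied by Lemma \ref{lema1}. The guiding principle is that the defining function $\lambda$ in $(\ref{2.1.5a})$ must itself be a function of $x$ alone (in fact a constant, as we will see), so the whole computation reduces to collecting coefficients of the various monomials $y^{(j)}$ that appear once $XF$ is written out.

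First I would compute $XF$ for the prolonged operator $(\ref{3.1})$. Since $F = y^{(2n)} + f(y)$ depends only on $y$ and $y^{(2n)}$, we get $XF = \zeta_{2n} + \eta\, f'(y)$, where $\eta = \al(x)y + \be(x)$ and $\zeta_{2n}$ is given by the formula $(\ref{3.2})$ with $p = 2n$. The invariance condition $(\ref{2.1.5a})$ then reads
\bb\label{plan:inv}
\zeta_{2n} + (\al y + \be)f'(y) = \lambda\left[y^{(2n)} + f(y)\right],
\ee
where $\lambda$ is the (a priori $x$- and derivative-dependent) multiplier. Substituting $(\ref{3.2})$ for $\zeta_{2n}$ produces, on the left-hand side, a term $\bigl[\binom{2n}{2n}\al^{(0)} - \binom{2n}{2n-1}\xi'\bigr]y^{(2n)} = (\al - 2n\xi')y^{(2n)}$ coming from the $j = 2n$ summand, together with the explicit $\be^{(2n)} + \al^{(2n)}y$ terms, the intermediate derivative terms for $1 \le j < 2n$, and the nonlinear piece $(\al y + \be)f'(y)$.

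Next I would match coefficients. Comparing the coefficient of $y^{(2n)}$ on both sides of $(\ref{plan:inv})$ forces $\lambda = \al - 2n\xi'$, which is exactly $(\ref{3.6})$; note that this already shows the multiplier $\lambda$ depends on $x$ only. The coefficients of the remaining derivatives $y^{(j)}$ for $1 \le j < 2n$ do not appear on the right-hand side at all (since $F$ contains no such derivatives), so each must vanish separately, giving precisely the relations $(\ref{3.7})$, namely $\binom{2n}{k}\al^{(2n-k)} - \binom{2n}{k-1}\xi^{(2n-k+1)} = 0$ for $1 \le k < 2n$. Finally, all terms with no $y^{(j)}$ for $j \ge 1$ — that is, $\be^{(2n)} + \al^{(2n)}y$ together with $(\al y + \be)f'(y)$ — must equal $\lambda f(y)$, which yields $(\ref{3.8})$.

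The step I expect to require the most care is the bookkeeping of which monomials in the derivatives $y^{(j)}$ are genuinely independent, so that matching coefficients is legitimate. Because $\xi$, $\al$, $\be$ are functions of $x$ alone and $f = f(y)$, the quantities $1, y, y', \dots, y^{(2n)}$ and $f(y), yf'(y), f'(y)$ are functionally independent in the universal space $\cal A$, so equating coefficients is valid; I would state this independence explicitly rather than treat it as automatic. The only mild subtlety is the separation in $(\ref{3.8})$: one must resist further splitting it into pieces proportional to $f$ and $f'$, since at this stage $f$ is arbitrary and such a split is exactly the compatibility/classification condition to be analysed later, not part of the determining system itself.
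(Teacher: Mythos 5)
Your proposal is correct and is essentially the paper's own proof: both substitute the prolongation formula (\ref{3.2}) with $p=2n$ into the invariance condition (\ref{2.1.5a}) to get (\ref{3.9})--(\ref{3.10}) and then equate coefficients of the independent quantities $1,\,y,\,y',\dots,y^{(2n)}$; in fact your attribution --- (\ref{3.6}) from the coefficient of $y^{(2n)}$, (\ref{3.7}) from the coefficients of $y^{(k)}$ with $1\leq k<2n$ --- is the correct one, whereas the paper's prose inadvertently interchanges these two sources. The only blemish is your parenthetical claim that the multiplier $\lambda$ turns out to be a constant: by (\ref{3.6}) together with (\ref{3.11})--(\ref{3.12}) it is in general linear in $x$ (constant only when $a_{1}=0$), but since your argument never uses constancy, this does not affect the proof.
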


\begin{proof}
Let (\ref{2.1.3}) be a Lie point symmetry generator of (\ref{1.1}). By Lemma \ref{lema1}, $X$ takes the form 
$$X=\xi(x)\f{\p}{\p x}+[\al(x)y+\be(x)]\f{\p}{\p y}.$$

From the same Lemma, its $2n-th$ extension is given by (\ref{3.1}), whose remaining coefficients are given by (\ref{3.2}). From the invariance condition (\ref{2.1.5a}) we can write
\bb\label{3.9}
(\al(x)y+\be(x))f'(y)+\zeta_{2n}=\lambda (y^{(2n)}+f(y)).
\ee
From (\ref{3.2}) we can conclude that
\bb\label{3.10}
\zeta_{2n}=\be^{(2n)}+\al^{(2n)}y+\sum_{j=1}^{2n}\left[\binom{2n}{j}\al^{(2n-j)}-\binom{2n}{j-1}\xi^{(2n-j+1)}\right]y^{(j)}.
\ee
Substituting (\ref{3.10}) into (\ref{3.9}), from the coefficient of the terms without derivatives, equation (\ref{3.8}) is obtained. Equation (\ref{3.7}) is obtained from the coefficient of $y^{(2n)}$, while (\ref{3.6}) comes from the remaining coefficients of $y^{(k)},\,\,1\leq k<2n$.
\end{proof}

\begin{lemma}\label{lema5}
Equations $(\ref{3.6})-(\ref{3.8})$ are equivalent to
\bb\label{3.11}
\xi=a_{1}x^2+a_{2}x+a_{3},
\ee 
\bb\label{3.12}
\al=\f{2n-1}{2}(2a_{1}x+a_{2})+k_{1}
\ee
and
\bb\label{3.13}
\left[\f{2n-1}{2}(2a_{1}x+a_{2})+k_{1}\right]yf'(y)+\be(x) f'(y)+\be^{(2n)}(x)+\left[\f{2n+1}{2}(2a_{1}x+a_{2})-k_{1}\right]f(y)=0.
\ee
\end{lemma}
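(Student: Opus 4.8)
The plan is to show the equivalence of the two systems in two directions, starting from the determining equations $(\ref{3.6})$--$(\ref{3.8})$ and deriving $(\ref{3.11})$--$(\ref{3.13})$, since the reverse implication is then a matter of substitution. First I would extract structural information from the pure-derivative relations $(\ref{3.7})$, which are an overdetermined linear system coupling $\al$ and $\xi$ through their successive derivatives for $1\leq k<2n$. The key observation is that $(\ref{3.6})$ gives $\al=\la+2n\xi'$, so $\al$ and $\xi$ are not independent; differentiating $(\ref{3.6})$ yields $\al^{(j)}=2n\,\xi^{(j+1)}$ for all $j\geq1$. Substituting these into $(\ref{3.7})$ should collapse the whole family of equations into constraints on $\xi$ alone, and I expect the binomial coefficients $\binom{2n}{k}$ and $\binom{2n}{k-1}$ to combine so that the surviving conditions force $\xi^{(k)}=0$ for all $k\geq3$. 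This is exactly what yields the quadratic form $(\ref{3.11})$, $\xi=a_{1}x^2+a_{2}x+a_{3}$.

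Once $\xi$ is pinned down as a quadratic, I would feed $\xi'=2a_{1}x+a_{2}$ back into $(\ref{3.6})$ to obtain $\al$. Here I must be careful: $(\ref{3.6})$ as written reads $\la=\al-2n\xi'$, but the target $(\ref{3.12})$ has the coefficient $\tfrac{2n-1}{2}$ rather than $2n$, and introduces a new integration constant $k_{1}$. I suspect the coefficient discrepancy is absorbed because $\la$ itself is not a free constant but is tied to the scaling structure of the problem; more precisely, the single scalar equation $(\ref{3.6})$ should be read together with the requirement that $\al$ be consistent across the different powers of $x$. I would determine $\al$ as the general affine function $\al=\tfrac{2n-1}{2}(2a_{1}x+a_{2})+k_{1}$ and then verify that this is the unique form compatible with $\al^{(j)}=2n\,\xi^{(j+1)}$ and the value of $\la$ dictated by $(\ref{3.6})$; the constant $k_{1}$ records the freedom left after the derivative conditions are imposed.

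With $\xi$ and $\al$ now explicit, deriving $(\ref{3.13})$ from $(\ref{3.8})$ is the most mechanical part: I would substitute $\al y+\be$ and $\al^{(2n)}$ into $(\ref{3.8})$, replace $\la$ by its expression in terms of $a_{1},a_{2},k_{1}$ coming from $(\ref{3.6})$, and regroup the terms multiplying $f(y)$. Since $\xi$ is quadratic, $\al$ is affine and hence $\al^{(2n)}=0$ for $n\geq1$, so the $\al^{(2n)}y$ term drops out, leaving a relation among $yf'(y)$, $\be f'(y)$, $\be^{(2n)}$, and $f(y)$. The coefficient of $f(y)$ should reorganize into $\tfrac{2n+1}{2}(2a_{1}x+a_{2})-k_{1}$ precisely because $\la-\al$ rewrites this way after using $(\ref{3.6})$ and $(\ref{3.12})$.

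The main obstacle I anticipate is reconciling the coefficient $2n$ in $(\ref{3.6})$ with the coefficient $\tfrac{2n-1}{2}$ in $(\ref{3.12})$ and tracking the constant $\la$ and the new constant $k_{1}$ correctly; this is where the bookkeeping between the scaling constant and the integration constants is delicate and easy to get wrong by a factor. The reduction of $(\ref{3.7})$ to $\xi^{(k)}=0$ for $k\geq3$ via the binomial identities is the other place demanding care, but I expect it to be a clean, if slightly tedious, induction on $k$ descending from $k=1$. Everything else is linear substitution, so once these two points are settled the equivalence follows, with the reverse direction obtained simply by checking that the forms $(\ref{3.11})$--$(\ref{3.13})$ satisfy $(\ref{3.6})$--$(\ref{3.8})$ identically.
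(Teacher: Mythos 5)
Your plan founders on one pivotal step: you treat $\lambda$ in (\ref{3.6}) as a constant and differentiate it to obtain $\al^{(j)}=2n\,\xi^{(j+1)}$ for $j\geq1$. But the $\lambda$ appearing in the determining equations is the multiplier of the invariance condition (\ref{2.1.5a}) --- by definition an undetermined \emph{function} (here of $x$, being the coefficient of $y^{(2n)}$), not a constant. Equation (\ref{3.6}) merely defines $\lambda(x)=\al(x)-2n\xi'(x)$ and carries no differential information coupling $\al$ to $\xi$. If you nevertheless impose your relation and substitute it into (\ref{3.7}), you get $\bigl[2n\binom{2n}{k}-\binom{2n}{k-1}\bigr]\xi^{(2n-k+1)}=0$, and the bracket never vanishes for $1\leq k<2n$ (it vanishes only at $k=2n$), so you are forced to $\xi^{(j)}=0$ for all $j\geq 2$ --- not, as you expect, only for $j\geq 3$. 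That kills the coefficient $a_{1}$, is inconsistent with (\ref{3.11}), and would wrongly exclude the projective generator (\ref{1.8}), which has $\xi=x^{2}$. So this is not the ``delicate bookkeeping'' you anticipated between the coefficients $2n$ and $\f{2n-1}{2}$; the premise itself is false, and with it both directions of your equivalence collapse (any solution of (\ref{3.11})--(\ref{3.13}) with $a_{1}\neq0$ makes $\lambda$ in (\ref{3.6}) non-constant).

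The correct extraction, which is the paper's, uses (\ref{3.7}) by itself and invokes (\ref{3.6}) only afterwards, as a definition of $\lambda$. Taking $k=2n-1$ in (\ref{3.7}) gives $2n\,\al'=n(2n-1)\,\xi''$, i.e. $\al'=\f{2n-1}{2}\xi''$; taking $k=2n-2$ (this is where the section's standing hypothesis $n>1$ enters, so that $k=2n-2\geq1$ is available) gives $\al''=\f{2n-2}{3}\xi'''$, and comparing with the derivative of the first relation yields $\f{2n+1}{6}\xi'''=0$, hence (\ref{3.11}). Integrating $\al'=\f{2n-1}{2}\xi''$ produces (\ref{3.12}), with $k_{1}$ simply the integration constant (the equations of (\ref{3.7}) with $k\leq 2n-3$ then hold automatically, since they involve only derivatives of order $\geq3$ of $\al$ and $\geq4$ of $\xi$). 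Finally, since $\al$ is affine, $\al^{(2n)}=0$, and substituting $\lambda=\al-2n\xi'=k_{1}-\f{2n+1}{2}(2a_{1}x+a_{2})$ into (\ref{3.8}) yields (\ref{3.13}); note that the coefficient of $f(y)$ there is $-\lambda(x)$, not $\lambda-\al$ as you wrote. The converse is, as you say, direct substitution once $\lambda$ is defined by (\ref{3.6}).
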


\begin{proof}
From (\ref{3.7}) with $k=2n-1$ and $k=2n-2$, we conclude that $\xi'''=0$, which is equivalent to (\ref{3.11}). Again, from (\ref{3.7}) with $k=2n-1$, we conclude that $\al$ is given by (\ref{3.12}). Then, substituting these expressions to $\al$ and $\xi$ into (\ref{3.6}) and next, into (\ref{3.8}), we arrive at (\ref{3.13}).
\end{proof}

\begin{remark}
We mentioned in Remark \ref{rem6} that the functions listed in Theorem \ref{teo1} arise from a compatibility condition. Such a condition is given by (\ref{3.13}) and it is similar to that one obtained by \cite{sv} in his group classification work regarding equation (\ref{1.11}), as well as in other works related with group classification of semilinear equations, see 
\cite{bok1,yu1,yi1,yi4,ipm,sv}. Therefore, the problem of finding the functions that can enlarge the symmetry groups was already considered in the precedent works. Then in this paper we restrict ourselves to those functions considered in \cite{bok1,yu1,yi1,yi4,ipm,sv}. 
\end{remark}

\section{Group classification}\label{main}
Here we prove Theorem \ref{teo1}. 

\subsection{Case $n=1$}
Let $(\ref{2.1.3})$ be a Lie point symmetry generator of equation
\bb\label{n1.1}
y''+f(y)=0.
\ee
Then, its second extension is given by
$$X^{(2)}=\xi(x,y)\f{\p}{\p x}+\eta(x,y)\f{\p}{\p y}+\zeta_{1}\f{\p}{\p y'}+\zeta_{2}\f{\p}{\p y''},$$
where
$$
\ba{l}
\zeta_{1}=\eta_{x}+y'(\eta_{y}-\xi_{x})-(y')^2\xi_{y},\\
\\
\zeta_{2}=\eta_{xx}+y'(2\eta_{xy}-\xi_{xx})-(y')^2(\eta_{yy}-2\xi_{xy})-(y')^3\xi_{yy}+y''(\eta_{y}-2\xi_{x})-3y'y''\xi_{y}.
\ea
$$

The condition $X^{(2)}(y''+f(y))=\lambda (y''+f(y))$ reads
$$
\eta_{xx}+y'(\eta_{xy}-\xi_{xx})-(y')^2(\eta_{yy}-2\xi_{xy})-(y')^3\xi_{yy}+y''(\eta_{y}-2\xi_{x})-3y'y''\xi_{y}+\eta f'(y)=\lambda (y''+f(y)),
$$
which is equivalent to
\bb\label{n2.1}
\eta_{xx}-f(y)(\eta_{y}-2\xi_{x})+f'(y)\eta+y'[2\eta_{xy}-\xi_{xx}+3f(y)\xi_{y}]+(y')^2(\eta_{yy}-2\xi_{xy})-(y')^3\xi_{yy}=0.
\ee

From the coefficients of $y'$, $(y')^2$ and $(y')^3$, we conclude, respectively, that $2\eta_{xy}-\xi_{xx}+3f(y)\xi_{y}=0$, $\eta_{yy}-2\xi_{xy}=0$ and $\xi_{yy}=0$. Then, solving the last two equations, we obtain 
\bb\label{n3.1}
\xi=a(x)y+b(x),\,\,\eta=a'(x)y^2+c(x)y+d(x).
\ee
Substituting (\ref{n3.1}) into $2\eta_{xy}-\xi_{xx}-3f(y)\xi_{y}=0$ and the remaining part of (\ref{n2.1}), we have
\bb\label{n4}
\ba{l}
2c'(x)-b''(x)+3a''(x)y+3a(x)f(y)=0,\\
\\
d''(x)+(2b'(x)-c(x))f(y)+d(x)f'(y)+c(x)yf'(y)+a'(x)y^2f(y)+c''(x)y+a'''(x)y^2=0.
\ea
\ee

\subsubsection{Case $f(y)=\lambda e^{\al y},\,\lambda\al\neq0$}

Setting $f(y)=\lambda e^{\al y}$ into (\ref{n4}), putting the solution into (\ref{n3.1}) and substituting the solutions into (\ref{2.1.3}) it is obtained a linear combination of the vector fields (\ref{1.4}) and (\ref{1.5}).

\subsubsection{Case $f(y)=\lambda y^{p},\,\lambda\neq0$}

Substituting $f(y)=\lambda y^{p}$ into (\ref{n4}), we arrive at
\bb\label{n5}
\ba{l}
2c'(x)-b''(x)+3a''(x)y+3\lambda a(x)y^p=0,\\
\\
d''(x)+\lambda[2b'(x)+(p-1)c(x)]y^p+\lambda p d(x)y^{p-1}+\lambda a'(x)y^{p+1}+c''(x)y+a'''(x)y^2=0.
\ea
\ee
Now we must separately analyze the cases $p=0,1,-3$. Then, we consider $p$ an arbitrary power if $p\notin\{0,1,-3\}$. 

\begin{enumerate}
\item[a)] $p$ arbitrary

In this case, from (\ref{n5}) we conclude that $a=2c'(x)-b''(x)=2b'(x)+(p-1)c(x)=d=0$. Then, solving the obtained system and substituting the solutions into (\ref{2.1.3}), it is obtained a linear combination of the vector fields (\ref{1.4}) and (\ref{1.6}).

\item[b)] $p=1$

Substituting $p=1$ into (\ref{n4}), we obtain
$$
a''(x)+\lambda a(x)=0,\,\,b''(x)-2c'(x)=0,\,\,a'''(x)+\lambda a'(x)=0,\,\,c''(x)+2\lambda b'(x)=0 
$$
and $d''(x)+\lambda d(x)=0$. Changing $d$ by $\be$, we obtain the generator (\ref{beta0}) with the condition (\ref{beta}). Solving the remaining equations, it is concluded that the solution is a linear combination of the generators (\ref{1.4}), (\ref{1.10}) and (\ref{eqn=1}).

\item[c)] $p=-3$

Substituting $p=-3$ into (\ref{n5}) we conclude that $b=c_{1}x^2+c_{2}x+c_{3}$ and $c=c_{1}x+c_{2}$. Then, solving the remaining equations and putting the solutions into (\ref{2.1.3}), it is then obtained a linear combination of (\ref{1.4}), (\ref{1.7}) and (\ref{1.8}).
\end{enumerate}

\subsubsection{Case $f(y)=\lambda,\,\lambda\in\R$}

In this case, setting $f(y)=\lambda$ into (\ref{n4}), solving the system and substituting the solution into (\ref{2.1.3}) we then obtain a linear combination of the generators (\ref{l1}) -- (\ref{l5}).

\subsection{Case $n>1$}

Now we prove Theorem \ref{teo1} with $n>1$. Actually, it is almost proved in Lemma \ref{lema5}. In what follows we finish the demonstration by considering equation (\ref{3.13}) and its consequences on (\ref{3.11}) and (\ref{3.12}).

\subsubsection{Case $f(y)=\lambda e^{\al y},\,\lambda\al\neq0$}

Substituting $f(y)=\lambda e^{\al y}$ into (\ref{3.13}) we conclude that $a_{1}=0, \beta = 0$ and $k_{1}=(1-2n)a_{2}/2$. Then we have $\xi=a_{2}x+a_{3}$ and $\eta=-2na_{2}\al$. Substituting these coefficients into (\ref{2.1.3}) it is obtained a linear combination of the vector fields (\ref{1.4}) and (\ref{1.5}).

\subsubsection{Case $f(y)=\lambda y^{p},\,\lambda\neq0$}

Substituting $f(y)=\lambda y^{p}$ into (\ref{3.13}), we arrive at
\bb\label{3.14}
\left\{[(2n-1)p+(2n+1)]a_{1}x+\f{(2n-1)p+(2n+1)}{2}a_{2}+(p-1)k_{1}\right\}y^p+p\be y^{p-1}+\be^{(2n)}=0.
\ee
We shall now consider the cases $p$ arbitrary, $p=1$ and $(\ref{1.14})$. When $p=0$, we have $f(y)=\lambda$.

\begin{enumerate}
\item[a)] $p$ arbitrary

From (\ref{3.14}), if $p$ is arbitrary, then $\be=a_{1}=0$ and
$$k_{1}=\f{(2n-1)p+(2n+1)}{2(1-p)}a_{2}.$$
Therefore
$$\xi=a_{2}x+a_{3},\,\,\,\eta=\f{2n}{1-p}a_{2}y$$
and, once these components are substituted in (\ref{2.1.3}), one obtains a linear combination of (\ref{1.4}) and  (\ref{1.6}).

\item[b)] $p=1$

For $p=1$, (\ref{3.14}) implies that $a_{1}=a_{2}=0$ and $\be^{(2n)}+\lambda\be=0,$ which gives us the generators (\ref{1.4}), (\ref{1.10}) and (\ref{beta0}).

\item[c)] $p=\f{1+2n}{1-2n}$

Finally, setting $p=\f{1+2n}{1-2n}$ into (\ref{3.14}), one concludes that $k_{1}=\be=0$ and then
$$\xi=a_{1}x^2+a_{2}x+a_{3},\,\,\,\eta=(2n-1)a_{1}xy+\f{2n-1}{2}a_{2}y$$
and, once substituted into (\ref{2.1.3}), it is obtained a linear combination of the generators (\ref{1.4}), (\ref{1.7}) and (\ref{1.8}).
\end{enumerate}

\subsubsection{Case $f(y)=\lambda,\,\lambda\in\R$}

Substituting $f(y)=\lambda$ in (\ref{3.13}), we conclude that
\bb\label{4.2.4.1}
\be(x)=\sum_{k=0}^{2n-1}c_{k}x^k-a_{1}\lambda\f{x^{2n+1}}{(2n)!}-\f{2n+1}{2}a_{2}\lambda\f{x^{2n}}{(2n)!}+k_{1}\lambda\f{x^{2n}}{(2n)!}.
\ee

Therefore, substituting (\ref{4.2.4.1}), (\ref{3.11}) and (\ref{3.12}) into (\ref{2.1.3}), it is obtained a linear combination of operators (\ref{1.4}), (\ref{l1}), (\ref{l2}), (\ref{l3}) and (\ref{l4}).

\section{Proofs of Theorems \ref{teo2} and \ref{teo3}}\label{noether}

Here we classify, from the Lie point symmetries, those which are Noether symmetries, according to Definition \ref{defn}. We restrict ourselves, however, only to the nonlinear cases.

To begin with, it is very simple to conclude that for the generator (\ref{1.4}), the following identity holds
\bb\label{4.1}
X_{1}{\cal L}+{\cal L}D\xi=0,
\ee
where ${\cal L}$ is the Lagrangian (\ref{2.2.5}), for any smooth function $F=F(y)$. Therefore the translation in $x$ is a Noether symmetry operator to equation (\ref{1.1}).

\subsection{Proof of Theorem \ref{teo2}}
Let us now prove Theorem \ref{teo2}. Firstly, applying (\ref{2.1.4}) to the Lagrangian (\ref{2.2.5}), with
$$F(y)=(-1)^{n}\f{\lambda}{p+1} y^{p+1},$$
we have
$$D_{p}{\cal L}+{\cal L}D\xi=\f{2n+1+p(2n-1)}{1-p}{\cal L}.$$
Therefore, $D_{p}$ is a variational symmetry operator if and only if $p$ is given by (\ref{1.14}). This proves Theorem \ref{teo2}.

\subsection{Proof of Theorem \ref{teo3}}
We have already demonstrated that the generators $X_{1}$ and $X_{2}$, given respectively by (\ref{1.4}) and (\ref{1.7}), are variational symmetries operators. Then, in order to prove Theorem \ref{teo3} it is only necessary to prove that $X_{3}$, given by (\ref{1.8}), is a Noether symmetry operator.

Applying the operator (\ref{2.1.5}) to the Lagrangian
\bb\label{4.2}
{\cal L}=\f{(y^{(n)})^{2}}{2}+(-1)^{n}\lambda \f{1-2n}{2} y^{\f{2}{1-2n}}
\ee
we obtain
\bb\label{4.3}
X_{3}{\cal L}+{\cal L}D\xi=n^2y^{(n-1)}y^{n}=D\left[\f{n^2}{2}(y^{(n-1)})^2\right].
\ee
Therefore $X_{3}$ is a divergence symmetry operator with potential
\bb\label{4.4}
A=\f{n^2}{2}(y^{(n-1)})^2.
\ee
\section{First integrals and exact solutions}\label{integrals}

From physical point of view, a first integral corresponds to a constant of motion, that is, a quantity which is preserved along time. Now we establish first integrals associated with the Noether symmetries of the nonlinear cases. An essential point to understand the results presented here is that
\bb\label{5.1}
\f{\de {\cal L}}{\de y^{(k)}}=(-1)^{n-k}y^{(2n-k)}, k \geq 1
\ee
where ${\cal L}$ is the Lagrangian (\ref{2.2.5}).

Considering the Lie point symmetry generator $X_{1}$, given by (\ref{1.4}), a first integral can be found setting on (\ref{2.2.6}) the Noether operator (\ref{n1}) associated with the generator (\ref{1.4}). Then, considering (\ref{5.1}), a first integral, for any smooth function $F=F(y)$ in (\ref{2.2.5}) is given by
\bb\label{5.2}
I=\f{(y^{(n)})^2}{2}+F(y)+\sum_{j=0}^{n-1}(-1)^{n-j}y^{(j+1)}y^{(2n-j-1)}.
\ee
In particular, considering the Lagrangian (\ref{4.2}), we have
\bb\label{5.3}
I_{1}=\f{(y^{(n)})^2}{2}+(-1)^{n}\lambda \f{1-2n}{2} y^{\f{2}{1-2n}}+\sum_{j=0}^{n-1}(-1)^{n-j}y^{(j+1)}y^{(2n-j-1)}.
\ee

Let us now find the first integral associated with the variational symmetry operator $X_{2}$, given by (\ref{1.5}). Replacing the operator $N$ in equation (\ref{2.2.6}) by the Noether operator given by (\ref{n2}), a simple calculation yields
\bb\label{5.4}
\ba{lcl}
I_{2}&=&\ds{x\f{(y^{(n)})^2}{2}+(-1)^n\lambda x\f{(1-2n)}{2}y^{\f{2}{1-2n}}}\\
\\
&&\ds{+\sum_{j=0}^{n-1}(-1)^{n-j-1}\left(\f{2n-2j-1}{2}y^{(j)}-xy^{(j+1)}\right)y^{(2n-j-1)}.}
\ea
\ee

Finally, considering (\ref{n3}), the Lagrangian (\ref{4.2}) and the potential (\ref{4.4}), we obtain the third first integral
\bb\label{5.5}
\ba{lcl}
I_{3}&=&\ds{\f{x^2}{2}(y^{(n)})^2+(-1)^n\lambda x^2\f{1-2n}{2}y^{\f{2}{1-2n}}-\f{n^2}{2}(y^{(n-1)})^2}\\
\\
&&\ds{+\sum_{j=0}^{n-1}(-1)^ {n-j-1}\left[j(2n-j)y^{(j-1)}+(2n-2j-1)xy^{(j)}-x^2y^{(j+1)}\right]y^{(2n-j-1)}}.
\ea
\ee


 
From  (\ref{5.3}), (\ref{5.4}) and (\ref{5.5}), once considering the expression $x^2 I_{1}-2x I_{2}+I_{3}$ and after reckoning, it is obtained the following ordinary differential equation
\bb\label{7.1}
\sum_{j=0}^{n-1}(-1)^{n-j}\left[j(2n-j)y^{(j-1)}y^{(2n-j-1)}\right]+\f{n^2}{2}(y^{(n-1)})^2+x^2 I_{1}-2x I_{2}+I_{3}=0.
\ee

For the case $n=1$, we have
\bb\label{7.2}
x^2 I_{1}-2x I_{2}+I_{3}=-\f{y^2}{2},
\ee
which can easily provide a solution of the equation (\ref{1.2}). Such solution, as far as we know, was firstly found by \cite{pin}. For further details regarding the Ermakov equation, see \cite{leandri}.

For $n=2$, (\ref{7.1}) is reduced to
$$-3yy''+2(y')^2+x^2 I_{1}-2x I_{2}+I_{3}=0,$$
an equation first obtained in \cite{bok1} and later discussed in \cite{bok2}.

We observe that from (\ref{7.1}) it is possible to obtain a three-parameter family of solutions of the considered equation, which does not imply that it is an easy task. For instance, to the case $n=1$ a solution is implicitly given by (\ref{7.2}). However for $n=2$ the situation is a little bit more complicated, see, for instance, the discussions about this point in \cite{bok1,bok2}.


As it was previously pointed out, it is interesting to observe that all Lie symmetries of the considered equation are also Noether symmetries. This fact was observed in the literature for certain equations, see \cite{yu2,yu3}. However, for equations of the type (\ref{1.1}), this fact was known for the case $n=1$ and, more recently, for the case $n=2$. In \cite{bok1,bok2} these aspects were discussed, but not from the point of view of the present paper. An implicit solution of (\ref{1.3}) was presented in \cite{bok1}, see also \cite{bok2}. In \cite{ipm} it was presented an explicit three-parameter family of solutions of (\ref{1.3}) by using a general linear combination of the Lie point symmetry generators of the equation.

We shall now use the characteristic method (see \cite{bk}, page 169, and \cite{i1}, section 9.4, for further details) to obtain a three-parameter family of solutions of
\bb\label{7.3}
y^{(2n)}+\lambda y^{\f{1+2n}{1-2n}}=0.
\ee

Firstly, we recall that the Lie point symmetry generators admitted by (\ref{7.3}) are linear combinations of the generators (\ref{1.4}), (\ref{1.7}) and (\ref{1.8}). Secondly, let $X=\al X_{1}+2\be X_{2}+\gamma X_{3}$, where $\al,\,\be$ and $\gamma$ are constants, be a linear combination of them. We therefore obtain
$$X=(\al +2\be x+\gamma x^2)\f{\p}{\p x}+(2n-1)(\be y+\gamma xy)\f{\p}{\p y}.$$

Solving the corresponding characteristic equations
$$\f{dx}{\al +2\be x+\gamma x^2}=\f{dy}{(2n-1)(\be y+\gamma xy)}$$
we have the invariant 
$$\phi=\f{y}{(\al+2\be x+\gamma x^2)^{\f{2n-1}{2}}}.$$

Then, taking 
$$y=A_{n}(\al+2\be x+\gamma x^2)^{\f{2n-1}{2}}$$
and imposing that this function is a solution of (\ref{7.3}) we conclude that
$$A_{n}=\left[(-1)^{n+1}\f{\lambda}{(\be^2-\al\gamma)^{n}}\left(\f{2^{n}n!}{(2n)!}\right)^{2}\right]^{\f{2n-1}{4n}}.$$

Then
\bb\label{7.4}
y(x)=\left[(-1)^{n+1}\f{\lambda}{(\be^2-\al\gamma)^{n}}\left(\f{2^{n}n!}{(2n)!}\right)^{2}\right]^{\f{2n-1}{4n}}(\al+2\be x+\gamma x^2)^{\f{2n-1}{2}}
\ee
is a three-parameter family of solutions of (\ref{7.3}) since $\be^2-\al\gamma\neq0$ and 
$$(-1)^{n+1}\f{\lambda}{(\be^2-\al\gamma)^{n}}>0.$$ 
In particular,
$$
y(x)=\left(\f{\lambda}{\be^2-\al\gamma}\right)^{\f{1}{4}}\sqrt{\al+2\be x+\gamma x^2}
$$
is a solution of the Ermakov equation (\ref{1.2}), while
$$
y(x)=\left[\f{-\lambda}{9(\be^2-\al\gamma)^2}\right]^{\f{3}{8}}(\al+2\be x+\gamma x^2)^{\f{3}{2}}
$$
is a solution of (\ref{1.3}), which was obtained, changing constants, in \cite{ipm}.

\section{Conclusions}

In the present paper we extended some results presented in \cite{bok1,ipm} for an arbitrary even-order autonomous ordinary differential equation and we also presented the one-dimensional version of the results obtained in \cite{sv,yu1}. 

Additionally, the class of equations (\ref{1.1}) was investigated from the point of view of the modern group analysis. It was shown that the largest symmetry Lie algebra, for the nonlinear cases, is reached for equation (\ref{7.3}). Moreover, for this equation the Noether symmetry group coincides with the Lie point symmetry group, an uncommon case in the literature. This fact was well known for equations (\ref{1.2}) and (\ref{1.3}), see \cite{bok1,bok2,moises}. However, for the class (\ref{1.1}), for arbitrary $n$, this is the first paper communicating such a mentioned result. We can explain this mysterious and interesting fact as a phenomena that occurs for certain equations involving power nonlinearities, and for a specific exponent the number of Lie symmetries not only is the largest, compared with other nonlinear cases of the group classification, but also all Lie symmetries are Noether symmetries. This fact, up to our knowledge, was first emphasised in \cite{yu2}. In analysis, such kind of exponent, called {\it critical exponent}, is related not only with embedding theorems, but also with some values dividing existence and non-existence cases of solution for certain differential equations. The reader is guided to \cite{yu2} for a better discussion. The question now is: why does this phenomena, that is, the Lie symmetry group coincides with the Noether symmetry group, only occur for this kind of nonlinearity for the class investigated? One possible explication might come from the the existence of nontrivial conserved quantities obtained from all Lie point symmetries of these equations, since for this kind of equations we not only have the maximal symmetry Lie algebra, but from any symmetry one can establish a conserved quantity. In particular, we do not know in the literature a semilinear Euler-Lagrange equation of the type (\ref{1.1}) with the following properties:
\begin{enumerate}
\item For the nonlinear cases the symmetry Lie algebra is the largest;
\item All Lie point symmetries are Noether symmetries;
\item At least one symmetry provides a trivial conserved quantity.
\end{enumerate} 

Then last, but not least, we leave a question: are there equations or systems accomplishing the points above?

\section*{Acknowledgment}

The authors would like to thank the reviewers for their useful and kind comments, which improved the paper. We also show our gratitude to Dr. Stylianos Dimas, for his careful reading of the manuscript, as well as useful discussions and suggestions.

The authors would like to thank FAPESP, scholarship n. 2012/22725-4 and grant n. 2014/05024-8, and UFABC for financial support. I. L. Freire is also grateful to CNPq for financial support, grant n. 308941/2013-6. P. L. da Silva also would like to thank CAPES for the PhD scholarship provided.

\end{document}